\documentclass{cccg18}
\usepackage{graphicx,amssymb,amsmath}

\usepackage{gensymb}
\usepackage[ruled,vlined]{algorithm2e}





\title{A Linear-Time Approximation Algorithm for the Orthogonal Terrain Guarding Problem}

\author{Wei-Yu Lai\thanks{Department of Computer Science and Information Engineering,  National Taiwan University of Science Technology, {\tt D10115005@mail.ntust.edu.tw}}
        \and
        Tien-Ruey Hsiang\thanks{Department of Computer Science and Information Engineering,  National Taiwan University of Science Technology, {\tt  trhsiang@csie.ntust.edu.tw}}}

\index{Lai, Wei-Yu}
\index{Hsiang, Tien-Ruey}


\begin{document}
\thispagestyle{empty}
\maketitle

\begin{abstract}
In this paper, we consider the 1.5-dimensional orthogonal terrain guarding problem. In this problem, we assign an $x$-monotone chain $T$ because each edge is either horizontal or vertical, and determine the minimal number of vertex guards for all vertices of $T$. A vertex $v_i$ sees a point $p$ on $T$ if the line segment connecting $v_i$ to $p$ is on or above $T$. 
We provide an optimal algorithm with $O(n)$ for a subproblem of the orthogonal terrain guarding problem. In this subproblem, we determine the minimal number of vertex guards for all right(left) convex vertices of $T$.  
Finally, we provide a 2-approximation algorithm that solves the  1.5-dimensional orthogonal terrain guarding problem in $O(n)$ time.  
\end{abstract}

\section{Introduction}
A 1.5-dimensional(1.5D) terrain $T$ is an $x$-monotone polygonal chain specified by $n$ vertices $V(T)=\{v_1,...,v_i=(x(v_i),y(v_i)),...,v_n\}$ ordered from left to right, such that $x(v_i) \leq x(v_{i+1})$ (strict monotonicity is often assumed). The vertices induce $n$-1 edges $E(T)=\{e_1,...,e_i =(v_i,v_{i+1}),...,e_{n-1}\}$.
Terrain $T$ is called an orthogonal terrain if each edge $e \in E(T)$ is horizontal or vertical.

A point $p$ sees $q$ (and $q$ sees $p$) if the line segment $\overline{pq}$ lies above $T$, or more precisely, if it does not intersect the open region that is bounded from above by $T$ and from the left and right by the downwards vertical rays emanating from $v_1$ and $v_n$.

There are two types of terrain guarding problems. One is the continuous terrain guarding problem, the objective of which is to determine a minimum cardinality subset of $T$ that guards $T$. The other is the discrete terrain guarding problem, where the goal is to guard candidate set $G$ and witness set $W$ on $T$, the objective is to determine a minimum cardinality subset of $G$ that guards $W$. So far, the study of the orthogonal terrain guarding problem has focused on solving discrete terrain guarding problems, where guarding candidate set $G$ and witness set $W$ are the vertices of the terrain.   

\subsection{Related works}

Let a terrain $T$ denote an $x$-monotone chain. Some researchers have considered an orthogonal terrains. A vertex $v_i$ of a orthogonal $T$ is convex (reflex) if the angle formed by the edges $e_{i-1}$ and $e_i$ above $T$ is of 90\degree (270\degree). An orthogonal terrain distinguishes between two types of convex vertices: a left convex and a right convex. A convex vertex is a left (right) convex vertex if  $e_{i-1}$($e_i$) is vertical. 

Katz and Roisman~\cite{2-OTG} provided a 2-approximation algorithm for the problem of guarding the vertices of an orthogonal terrain. The researchers constructed a chordal graph that reveals the relationship of visibility between vertices. An orthogonal terrain guarding problem can be formulated as a 2-approximation algorithm by using the algorithm to compute the minimum clique cover of a chordal graph\cite{F.Gavril}.

Durocher et al.~\cite{O-OTG} suggested a linear-time algorithm to guard the vertices of an orthogonal terrain under a directed visibility model. Directed visibility mode considers various types of vertices with different visibilities. If $u$ is a reflex vertex, $u$ sees a vertex $v$ of $T$ if and only if every point in the interior of the line segment $\overline{uv}$ lies strictly above $T$. If $u$ is a convex vertex, then $u$ sees a vertex $v$ of $T$ if and only if $\overline{uv}$ is a non-horizontal line segment that lies on or above $T$.

Lyu and {\" U}ng{\"o}r~\cite{nlogm} formulated a 2-approximation algorithm for the orthogonal terrain guarding problem that runs in $O(n$log$m)$, where $m$ is the output size. They provided an optimal algorithm for the subproblem of the orthogonal terrain guarding problem. 
Based on the type of vertex of orthogonal terrain,  
the subproblem with the objective of finding a minimum cardinality subset of $V(T)$ that guards all right(left) convex verteices of $V(T)$. The optimal algorithm  can be performed through stack operation to reduce time complexity. The $O(n$log$m)$ time 2-approximation algorithm was previously considered the best algorithm for the orthogonal terrain guarding problem.  

King and Krohn ~\cite{NP} employed a 1.5D terrain to prove that the general terrain guarding problem is NP-hard through planar 3-SAT.

In early studies of the 1.5D terrain guarding problem, the algorithm design of the constant-factor approximation algorithm was discussed.

Ben-Moshe et al.~\cite{F-app} provided the first constant-factor approximation algorithm for the terrain guarding problem and left the complexity of the problem open.
King~\cite{5-app} devised a simple 4-approximation algorithm that was later determined to be a 5-approximation algorithm.
In 2011, Elbassioni et al.~\cite{4-app} formulated a 4-approximation algorithm.

Gibson et al.~\cite{D-PTAS} considered the discrete terrain guarding problem that determines the minimal cardinality from a candidate point to the target point set. The researchers proved the existence of a planar graph that appropriately relates the local and global optimum, thereby indicating that the discrete terrain guarding problem enables the application of a polynomial time approximation scheme (PTAS) based on a local search.

For the continuous 1.5D terrain guarding problem, Friedrichs et al.~\cite{C-PTAS} constructed finite guard and witness sets $G$ and $X$ such that there existed an optimal guard cover $G’' \subseteq G$ that covers terrain $T$, and that when these guards monitored all points in $X$, the entire terrain was guarded.  The continuous 1.5D terrain guarding problem applied a PTAS through the construction of a finite guard and witness set and former PTAS\cite{D-PTAS}.

\subsection{Our results and problem definition}
In this paper, we describe a 2-approximation algorithm that runs in $O(n)$ time. This algorithm is an improvement on what was previously considered the best algorithm~\cite{nlogm}, namely a 2-approximation algorithm with $O(n$log$m)$ running time where $n$ and $m$ are the size of the input and output, respectively. The orthogonal terrain guarding problem is defined as follows.

{\bf Definition 1} (orthogonal terrain guarding problem). Given an orthogonal terrain $T$, compute a subset $G$$\subseteq$$V(T)$ of minimum cardinality that guards $V(T)$.

\subsection{Paper organization}
The remainder of this paper is organized as follows: Section 2 describes the preliminaries; Section 3 provides an optimal algorithm for the right (left) convex vertex guarding problem, as well as its proof;
Section 4 reveals the $O(n)$ time 2-approximation algorithm for the orthogonal terrain guarding problem; and Section 5 presents our conclusions.

\section{Preliminaries}

Given an orthogonal terrain $T$, we denote the $x$- and $y$-coordinates of a vertex $v \in V(T)$ by $x(v)$ and $y(v)$, respectively, as well as the rightmost(leftmost) vertex of $v$ that sees $v$ by $R(v)$($L(v)$). The symbol $vis(v)$ denotes the visibility region of $v$ with $vis(v)=\{u \in v(T) \mid v$ sees $u$\}.

$V(T)$ can be broken down into the right convex vertex, left convex vertex, right reflex vertex and left reflex vertex.  
A vertex $v_i$ of an orthogonal terrain $T$ is convex (reflex) if the angle formed by the edges $e_{i-1}$ and $e_i$ above $T$ is 90\degree(270\degree).
A convex vertex is left (right) convex if $e_{i-1} (e_i)$ is vertical.
Also, if the leftmost (or rightmost) vertex of $T$ is an endpoint of a horizontal segment, it is marked as a convex vertex. 
We denote the set of left convex vertices by $V_{LC}(T)$,
the set of right convex vertices by $V_{RC}(T)$,
the set of left reflex vertices by $V_{LR}(T)$
and the set of right reflex vertices by $V_{RR}(T)$.
Figure~\ref{fig1} illustrates an example of convex and reflex vertices of a terrain. 

\begin{figure}[h]
\begin{center}
\includegraphics[scale=0.5]{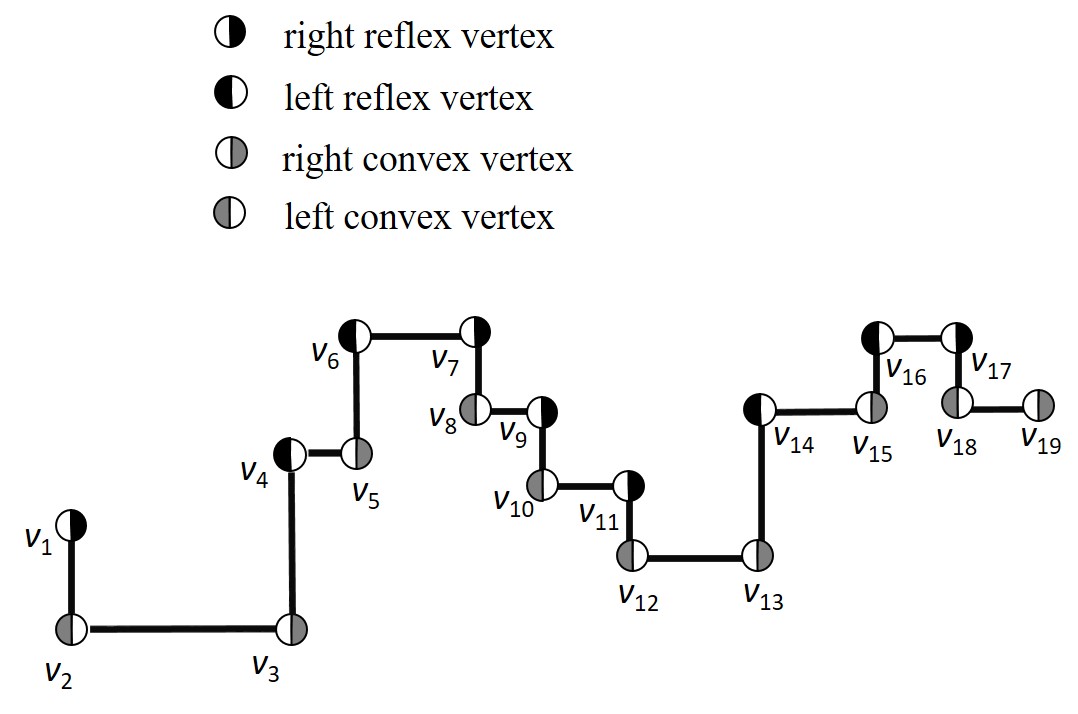}
\caption{Vertex types of the orthogonal terrain.}
\label{fig1}
\end{center}
\end{figure}

Using Figure~\ref{fig1}, an easy observation are made on orthogonal terrains:


\begin{obs}
\label{lrv sees two rcv}
A vertex $v \in V_{LR}(T)$ sees at most two right convex vertices $v_j$ and $v_k$, one of which is immediately to  the right of $v$, and the other is immediately below $v$. 
\end{obs}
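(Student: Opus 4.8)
The plan is to argue purely from the orthogonal structure of the chain together with the "on or above" visibility definition. Let $v = v_i \in V_{LR}(T)$. Being a left reflex vertex, the incoming edge $e_{i-1}$ is vertical and the outgoing edge $e_i$ is horizontal, and the $270\degree$ angle sits above $T$; concretely this means the vertical edge $e_{i-1}$ descends into $v$ from above and the horizontal edge $e_i$ leaves $v$ to the right, with the terrain locally looking like a downward step whose free space is the quadrant above-and-to-the-left of $v$ together with the region above $e_i$. First I would fix this local picture precisely and record two immediate consequences: (i) any point strictly below $v$ and to its left is blocked from $v$ by the vertical edge $e_{i-1}$, and (ii) the point of $T$ immediately below $v$ (the other endpoint of the vertical segment containing $e_{i-1}$, or the foot of that segment) is a right convex vertex, call it $v_j$, and $v$ sees it trivially since $\overline{v v_j}$ runs along $e_{i-1}$.

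Next I would handle the horizontal direction. Walking rightward from $v$ along $e_i$, the first vertex encountered, call it $v_k$, is the right endpoint of the horizontal edge $e_i$; by the classification of vertex types in an orthogonal terrain, the edge leaving $v_k$ to the right is vertical, and checking the two sub-cases (it goes up, making $v_k$ left convex — wait, one must be careful: $v_k$ is the \emph{left} endpoint of that next vertical edge, so if the vertical edge goes down then $v_k$ is a right convex vertex, and if it goes up then $v_k$ is a right reflex vertex). The key claim is that in either case, $v$ sees $v_k$ along $e_i$, but $v$ sees \emph{no} vertex beyond $v_k$ among the right convex vertices: if the edge after $v_k$ goes up, $v_k$ itself is reflex (not what we are counting) and the rising edge blocks the line of sight to everything farther right at $v$'s height; if it goes down, then $v_k$ is the claimed right convex vertex and again the subsequent descent drops $T$ below $v$'s level so that the next time the terrain could be seen it would require the sightline to pass above an intervening peak, which a short monotonicity argument rules out. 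The symmetric reasoning to the left, combined with consequence (i), shows $v_j$ (immediately below) is the only right convex vertex visible on that side.

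The main obstacle I anticipate is making the "nothing farther right is visible" step fully rigorous: in principle a sightline from $v$ could skim over several step-downs and catch a right convex vertex much farther along. To rule this out I would use the standard terrain fact that visibility from $v$ rightward is "staircase-convex" — once the terrain drops below the horizontal line through $v$, any later vertex at or above a previously-seen obstacle's top is occluded — formalized via the rightmost-visible-vertex function $R(\cdot)$ and the observation that between $v$ and any right convex vertex strictly to the right of $v_k$ there must be at least one right reflex vertex sitting at height $\ge y(v)$ that blocks the view. Once that occlusion lemma is in hand, the count "at most two, one immediately right and one immediately below" follows by assembling the left-side and right-side analyses. I would present the local step-configuration picture (referencing Figure~\ref{fig1}) to keep the case analysis short rather than enumerating all rotations in formulas.
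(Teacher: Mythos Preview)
The paper does not actually prove this statement; it is recorded as an Observation and justified only by pointing to the example in Figure~\ref{fig1}. So there is no argument of the paper's to compare against---only the correctness of your own attempt is at stake.

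Your attempt rests on a reversed local picture. You write that for $v=v_i\in V_{LR}(T)$ ``the vertical edge $e_{i-1}$ descends into $v$ from above'' and that the free space near $v$ is ``the quadrant above-and-to-the-left''. That is the configuration of a \emph{left convex} vertex. A left \emph{reflex} vertex has the $270\degree$ angle above $T$, which forces $v_{i-1}$ to lie \emph{below} $v$ (so $e_{i-1}$ rises into $v$) with $e_i$ horizontal to the right; $v$ is the \emph{top} of an upward step, and the free region near $v$ occupies three quadrants, not one. With the wrong picture, your consequence~(i) (``points strictly below and to the left are blocked by $e_{i-1}$'') is simply false: $e_{i-1}$ sits below $v$, not above it, and obstructs nothing to the left.

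Once the geometry is corrected, both directions become short and your ``main obstacle'' evaporates. To the left: for any right convex $v_j$ with $j<i-1$, the vertex $v_{j+1}$ lies directly above $v_j$, so immediately to the right of $x(v_j)$ the terrain sits at height $y(v_{j+1})>y(v_j)$, whereas the segment $\overline{v\,v_j}$ is at height essentially $y(v_j)$ there; hence the segment enters the region below $T$ and $v$ does not see $v_j$. To the right: for any right convex $v_j$ with $j>i+1$, either $y(v_j)>y(v_i)$ and Lemma~\ref{higher can't see} forbids visibility, or $y(v_j)<y(v_i)$ and the segment $\overline{v\,v_j}$ drops strictly below the horizontal edge $e_i$ the moment it leaves $v$, again entering the region below $T$. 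The staircase-convexity lemma about $R(\cdot)$ that you planned to develop is unnecessary; the edge $e_i$ already does all the blocking on the right.
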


For example, $v_{14}$ is a left reflex vertex that sees $v_{13}$ and $v_{15}$, but it cannot see $v_3, v_{5}$ or $v_{18}$. 

\begin{obs}
\label{lmost RC see a LR}
The leftmost $v$ in $V_{RC}(T)$ sees at most one left reflex vertex, which is immediately above $v$.
\end{obs}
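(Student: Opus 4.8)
\emph{Proof idea.} The plan is to read off the local shapes of the two vertex types and combine them with the fact that $v$ is leftmost. First I would record that a right convex vertex $v_i$ has a horizontal edge $e_{i-1}$ reaching it from the left at height $y(v_i)$ and a vertical edge $e_i$ going straight up, so that $v_{i+1}$ lies directly above $v_i$; dually, a left reflex vertex $v_j$ has a vertical edge $e_{j-1}$ going straight down, so that $v_{j-1}$ lies directly below $v_j$, and a horizontal edge $e_j$ leaving it to the right at height $y(v_j)$. From the second shape I would extract the one fact that does most of the work: the vertex $v_{j-1}$ directly below any left reflex vertex $v_j$ is itself a right convex vertex, since $e_{j-2}$ reaches it horizontally from the left and $e_{j-1}$ leaves it vertically upward. (I would also invoke the convention that the extreme vertices are convex endpoints of horizontal edges, so this step never degenerates.)

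Now fix $v=v_i$ to be the leftmost element of $V_{RC}(T)$. The witness for ``at most one'' is $v_{i+1}$: it lies directly above $v_i$, it is joined to $v_i$ by the vertical edge $e_i$, which lies on $T$, so $v$ sees it; and by the shape above, $v_{i+1}$ is a left reflex vertex. So $v$ does see one left reflex vertex, and it is the one immediately above $v$. It then remains to rule out every other left reflex vertex $w=v_j$, which I would do by splitting on the index $j$.

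If $j\le i$ I would reach a contradiction with minimality. The case $j=i$ is impossible, since $v_i$ is convex rather than reflex. If $j<i$, then the vertex $v_{j-1}$ directly below $w$ is right convex by the extracted fact and has index $j-1<i$, contradicting the choice of $v_i$ as the \emph{leftmost} right convex vertex; hence in fact no left reflex vertex lies at or to the left of $v_i$. If $j\ge i+2$, then $x(v_j)>x(v_i)$, because the chain leaves $v_{i+1}$ rightward along the horizontal edge $e_{i+1}$, so every vertex after $v_{i+1}$ is strictly to the right of $x(v_i)$. Here I would prove the stronger statement that $v$ sees nothing strictly to its right: the vertical edge $e_i$ rises from $v_i$ to the strictly higher point $v_{i+1}$, so just to the right of $x(v_i)$ the terrain already sits at height $y(v_{i+1})>y(v_i)$, whereas any straight segment emanating from $v_i$ with finite slope is still at height arbitrarily close to $y(v_i)$ there, hence strictly below $T$; the only segment from $v_i$ with infinite slope is vertical and reaches only points of $e_i$. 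Combining the two cases, $v_{i+1}$ is the unique left reflex vertex seen by $v$.

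I expect the one delicate point to be the blocking argument via $e_i$: it relies on $e_i$ being nondegenerate and on $v_{i+1}$ being the vertex immediately after $v_i$, so that there really is terrain at height $y(v_{i+1})$ just to the right of $x(v_i)$. Everything else is bookkeeping on vertex types, plus one sentence dispatching the extreme-vertex conventions so that ``$v_{j-1}$ is right convex'' is always legitimate.
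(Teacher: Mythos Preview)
Your argument is correct, and in fact it is more complete than the paper's own justification, which is only an example-based sketch. The two routes differ slightly. The paper argues (in the spirit of Observation~\ref{lrv sees two rcv}) that if some left reflex $v_p$ with $x(v_p)\neq x(v)$ saw $v$, then $v$ would have to be the vertex immediately to the right of $v_p$, forcing $v_{i-1}$ to be left reflex and hence $v_{i-2}$ to be right convex, contradicting minimality. You instead prove a strictly stronger structural fact on the left---there are \emph{no} left reflex vertices with index below $i$ at all, since each such $v_j$ would sit on top of a right convex $v_{j-1}$---and you handle the right side by a direct blocking argument through the vertical edge $e_i$. Your decomposition avoids any appeal to Observation~\ref{lrv sees two rcv} and makes the right-hand case explicit, which the paper's sketch does not.

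One small remark: your assertion that $v_{i+1}$ is itself a left reflex vertex can fail when $v_{i+1}=v_n$ is the terminal vertex of $T$ (the paper's convention only classifies extreme endpoints of \emph{horizontal} edges). This does not damage your proof of the stated ``at most one'' claim, since your case split on $j<i$, $j=i$, $j=i+1$, and $j\ge i+2$ already exhausts all possibilities regardless of how $v_{i+1}$ is classified; it only means the existence clause you add (that $v$ really does see one left reflex vertex) need not hold in that boundary case.
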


In Figure~\ref{fig1}, $v_3$ is the leftmost vertex in $V_{RC}(T)$ and $v_3$ sees $v_4 \in V_{LR}(T)$. If $v_3$ is seen by any $v_p \in V_{LR}(T)$ such that $x(v_p) \neq x(v_3)$, $v_2$ must be the left reflex vertex and $v_1$ must be the right convex vertex; however, this contradicts  $v_3$ being the leftmost right convex vertex and $v_1$ being the right convex vertex.
  



\begin{obs}
\label{vj low Lvi}
Let $v$ be a right convex vertex and $u$ be any vertex such that $x(L(v))<x(u)<x(v)$, we have $y(u) \leq y(L(v))$.
\end{obs}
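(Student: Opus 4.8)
The plan is to lean on two structural features of a right convex vertex $v=v_i$. First, the edge $e_{i-1}$ immediately to the left of $v$ is horizontal, so the terrain has height exactly $y(v)$ at every abscissa slightly smaller than $x(v)$. Second, in an orthogonal terrain the two edges at any interior vertex alternate between horizontal and vertical, so every interior vertex is the endpoint of a horizontal edge and is therefore ``flat'' at its own height on one of its two sides. If $x(L(v))=x(v)$ there is nothing to prove, so assume $x(L(v))<x(v)$; in particular $L(v)$ lies strictly to the left of $v$.

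I would first establish the auxiliary inequality $y(L(v))\ge y(v)$. Since $L(v)$ sees $v$, the segment $\overline{L(v)\,v}$ is disjoint from the open region below $T$. If instead $y(L(v))<y(v)$, then this segment would approach $v$ from the left while staying strictly below height $y(v)$ for abscissae just smaller than $x(v)$, hence would pass below the horizontal edge $e_{i-1}$ and meet the open region below $T$ --- a contradiction. With $y(L(v))\ge y(v)$ in hand, $\overline{L(v)\,v}$ decreases weakly in height from left to right, so every point of it has height at most $y(L(v))$. Writing $h$ for the height of $\overline{L(v)\,v}$ at abscissa $x(u)$, we thus have $h\le y(L(v))$, and it remains only to prove $y(u)\le h$.

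For this, observe that $u$ is an interior vertex, since $x(v_1)\le x(L(v))<x(u)<x(v)\le x(v_n)$ forces $x(v_1)<x(u)<x(v_n)$. Hence $u$ is incident to exactly one horizontal edge, which has height $y(u)$ and extends to one side of $u$. Pick a short open interval of abscissae on that side, having $x(u)$ as an endpoint and chosen short enough to lie both inside $(x(L(v)),x(v))$ and inside that horizontal edge; over this interval the terrain consists precisely of the horizontal segment at height $y(u)$, so if $\overline{L(v)\,v}$ dipped below $y(u)$ anywhere on the interval it would once more enter the open region below $T$. Therefore $\overline{L(v)\,v}$ stays at height $\ge y(u)$ on the interval, and letting the abscissa tend to $x(u)$ gives $h\ge y(u)$, which finishes the argument since then $y(u)\le h\le y(L(v))$.

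The step I expect to be the main obstacle is this last one, and the delicacy is entirely in being careful about what ``above $T$'' means: an orthogonal terrain is not the graph of a single-valued function of $x$, so visibility must be used in its precise ``disjoint from the open region below $T$'' form, and one has to check uniformly over the four possible types of $u$ (left or right convex, left or right reflex) that the chosen flat interval really does sit inside $(x(L(v)),x(v))$ no matter which side $u$'s horizontal edge points to and where its far endpoint lies.
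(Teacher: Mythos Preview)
Your argument is correct. The key chain $y(u)\le h\le y(L(v))$ works exactly as you describe: the horizontal edge $e_{i-1}$ to the left of the right convex vertex $v$ forces $y(L(v))\ge y(v)$, so the sight line $\overline{L(v)\,v}$ is weakly decreasing; and your use of the horizontal edge incident to $u$ correctly pins the sight line at height at least $y(u)$ near $x(u)$, with continuity giving the bound at $x(u)$ itself. The only tacit assumption is that consecutive edges of the orthogonal terrain alternate between horizontal and vertical (so every interior vertex really is incident to a horizontal edge); this is standard and implicit throughout the paper, so it is not a gap.

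By comparison, the paper does not give a proof at all: it records this statement as an observation and merely illustrates it with the example of $v_{13}$ and $L(v_{13})=v_7$ in Figure~\ref{fig1}. Your write-up therefore supplies an argument where the paper has none. If you want to tighten it, the last paragraph of caveats can be dropped --- once you have chosen the short interval on the side of $u$'s horizontal edge, the case analysis over the four vertex types is unnecessary, since the only datum you use is the existence of that horizontal edge, which holds uniformly.
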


Take Figure~\ref{fig1} for example, $v_{13}$ is a right convex vertex, $v_7$ is the leftmost vertex in $vis(v_{13})$ and every vertex of $\{v_8, v_9,...,v_{12} \}$ is not above $v_7$.  
Consider, for Observation~\ref{vj low Lvi},
if $v$ be a left reflex vertex and $x(L(v))<x(u)<x(v)$, $u$ can be located above $L(v)$. 
For example, $v_6$ is a left reflex vertex and $v_1$ sees $v_6$, but $v_4$ is above $v_1$.


In addition, regarding the 1.5D terrain guarding problem, Ben-Moshe~\cite{F-app} derived a visibility relation of ordered vertices that was also employed in a few studies ~\cite{5-app,4-app,D-PTAS,2-OTG,O-OTG}. Their lemma is stated as follows: 

\begin{lemma}[\cite{F-app}]
\label{order lemma}
Let $a, b, c, d \in V(T)$ be four vertices of a terrain $T$ such that $x(a)<x(b)<x(c)<x(d)$. If $a$ sees $c$ and $b$ sees $d$, $a$ sees $d$.
\end{lemma}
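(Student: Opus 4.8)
The plan is to exhibit a polygonal path from $a$ to $d$ that lies on or above $T$ and that is itself dominated (in the vertical direction) by the straight segment $\overline{ad}$; this immediately gives that $\overline{ad}$ lies on or above $T$, so $a$ sees $d$. The path I would use is obtained by splicing $\overline{ac}$ and $\overline{bd}$ at a common point. First I would show these two segments actually meet over the common abscissa range $[x(b),x(c)]$: since $a$ sees $c$, the segment $\overline{ac}$ is on or above $T$ at abscissa $x(b)$, and in particular on or above the terrain vertex $b$; symmetrically, since $b$ sees $d$, the segment $\overline{bd}$ is on or above the vertex $c$. Hence at $x=x(b)$ the segment $\overline{bd}$ is weakly below $\overline{ac}$, while at $x=x(c)$ it is weakly above $\overline{ac}$, so by the intermediate value theorem the two segments share a point $p$ with $x(b)\le x(p)\le x(c)$.

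Next, consider the path $\Pi$ that follows $\overline{ap}\subseteq\overline{ac}$ from $a$ to $p$ and then $\overline{pd}\subseteq\overline{bd}$ from $p$ to $d$. Each piece is a subsegment of a segment already known to lie on or above $T$, so $\Pi$ lies on or above $T$ on all of $[x(a),x(d)]$. It then suffices to show that $p$ lies on or below the line through $a$ and $d$: in that case, on $[x(a),x(p)]$ the segment $\overline{ad}$ dominates $\overline{ap}$ (they agree at $a$ and $\overline{ad}$ is at least as high at $x(p)$, both being linear), and on $[x(p),x(d)]$ it dominates $\overline{pd}$ (they agree at $d$ and $\overline{ad}$ is at least as high at $x(p)$), so $\overline{ad}\ge\Pi\ge T$ throughout $[x(a),x(d)]$, which is exactly $a$ sees $d$.

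The crux is therefore to locate $p$ relative to the chord $\overline{ad}$, and this is where the \emph{orientation} of the crossing is used: because $\overline{bd}$ passes from weakly below $\overline{ac}$ at $x(b)$ to weakly above it at $x(c)$ while coinciding with it at $x(p)$, a short slope computation shows that the slope of $\overline{ac}$ is at most the slope of $\overline{bd}$. Hence along $\Pi$ the slope does not decrease at the bend $p$, so $\Pi$ is convex and its middle vertex $p$ lies on or below the chord $\overline{ad}$, completing the argument.

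I expect the only real subtleties to be: (i) making the ``weakly below/above at $x(b)$ and $x(c)$'' comparisons fully rigorous — these rely precisely on $b$ and $c$ being vertices \emph{of} $T$ together with the two visibility hypotheses, and on $x(a)<x(b)<x(c)<x(d)$ so that all four abscissae lie in the relevant ranges; and (ii) the bookkeeping for orthogonal terrains with vertical edges, where $T$ should be read as its upper envelope (a function of $x$) so that ``on or above $T$'' is a pointwise inequality and the monotonicity/convexity reasoning above goes through verbatim. Degenerate cases ($p$ coinciding with $b$ or $c$, or $\overline{ac}\parallel\overline{bd}$) are handled by the same inequalities and do not cause trouble.
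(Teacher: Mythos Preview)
The paper does not actually prove this lemma: it is quoted from Ben-Moshe et al.\ \cite{F-app} and used as a black box throughout. So there is no ``paper's proof'' to compare against. Your argument is correct and is in fact the standard proof of the order claim as it appears in the terrain-guarding literature: intersect $\overline{ac}$ and $\overline{bd}$ over $[x(b),x(c)]$, observe that the slope of $\overline{bd}$ is at least that of $\overline{ac}$ (from the weakly-below/weakly-above comparison at the two endpoints), so the spliced path $a\to p\to d$ is convex and lies under the chord $\overline{ad}$, which therefore clears the terrain. Your handling of the degenerate and orthogonal cases is also appropriate.
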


Other studies on the orthogonal terrain guarding problem have propounded some visibility relations, such as  



\begin{lemma}[\cite{2-OTG}]
\label{left side}
If $v \in V_{RR}(T)$ sees $u \in V_{RC}(T)$, $v$ on the left side of $u$. 
\end{lemma}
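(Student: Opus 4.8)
The plan is to read off the local geometry at the right convex vertex $u$ and show that a vertical ``wall'' immediately to its right blocks every line of sight arriving from the right. Write $u=v_k$. Since $u$ is right convex, its right edge $e_k$ is vertical; checking the two ways a vertical right edge can sit against the $90\degree$ angle above $T$ (pointing up or pointing down), the downward orientation produces a $270\degree$ angle, so $e_k$ must point upward. Hence $v_{k+1}$ lies directly above $u$, with $x(v_{k+1})=x(u)$ and $h:=y(v_{k+1})-y(u)>0$; and because $T$ is orthogonal and $x$-monotone, the edge $e_{k+1}$ out of $v_{k+1}$ is horizontal and directed rightward, so the terrain has constant height $y(v_{k+1})$ on the interval $(x(u),x(v_{k+2})]$ just to the right of $u$.

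First I would dispose of the degenerate case $x(v)=x(u)$. A vertical line meets an $x$-monotone terrain in at most one point or one vertical segment, so the line $x=x(u)$ meets $T$ exactly along $e_k$, whose only vertices are $u$ and $v_{k+1}$. But $v_{k+1}$ has a vertical left edge, so $v_{k+1}\in V_{LR}(T)$ and hence $v_{k+1}\notin V_{RR}(T)$; and $v\ne u$ because one is reflex and the other convex. Therefore $x(v)\ne x(u)$.

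Then I would rule out $x(v)>x(u)$ by contradiction. If $x(v)>x(u)$, monotonicity puts $v$ strictly to the right of $v_{k+1}$, so $v_{k+1}\ne v_n$ and the edge $e_{k+1}$ above genuinely exists. Let $s$ be the (finite) slope of $\overline{uv}$. For small $\delta>0$ the point of $\overline{uv}$ at abscissa $x(u)+\delta$ has ordinate $y(u)+\delta s$, while the terrain at that abscissa has ordinate $y(u)+h$; taking $\delta$ small enough (smaller than $x(v_{k+2})-x(u)$, and than $h/s$ if $s>0$) forces $y(u)+\delta s<y(u)+h$, so that point of $\overline{uv}$ lies strictly below $T$, i.e.\ in the open region beneath $T$. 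Then $\overline{uv}$ is not on or above $T$, contradicting that $v$ sees $u$. Combined with the previous paragraph this gives $x(v)<x(u)$, i.e.\ $v$ lies on the left side of $u$.

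The only substantive step is the first one --- identifying the upward wall and the horizontal shelf immediately to the right of a right convex vertex; after that the argument is just the elementary slope estimate. The points that need care are the equal-abscissa case and the existence of $v_{k+1}$ and $v_{k+2}$, all settled by $x$-monotonicity as above, together with stating the ``short case analysis'' on the angle at $u$ cleanly rather than merely gesturing at it.
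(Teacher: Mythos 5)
Your proof is correct. Note that the paper itself offers no argument for this lemma at all --- it is imported verbatim from Katz and Roisman~\cite{2-OTG} --- so there is no internal proof to compare against; your write-up supplies exactly the standard justification one would give: a right convex vertex $u=v_k$ (by the paper's definition, $e_k$ is vertical, and convexity forces it to point upward) sits at the foot of a vertical wall topped by the horizontal edge $e_{k+1}$, so any segment from $u$ to a point with larger abscissa dips strictly below $T$ immediately to the right of $u$, and the equal-abscissa case is excluded because the only other vertex on that vertical line is $v_{k+1}\in V_{LR}(T)$, not $V_{RR}(T)$. The one convention to keep in mind is that the paper also ``marks'' the extreme endpoints of $T$ as convex when incident to a horizontal edge; such a vertex has no vertical right edge and so does not satisfy the paper's criterion for membership in $V_{RC}(T)$, hence your standing assumption that $e_k$ exists and is vertical is consistent with the definitions actually used here.
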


\begin{lemma}[\cite{O-OTG}]
\label{higher can't see}
If $v \in V_{RC}(T) \cup V_{LC}(T)$ is higher than $u \in V_{RR}(T) \cup V_{LR}(T)$, $u$ cannot see $v$.
\end{lemma}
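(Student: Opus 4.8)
The plan is to prove the statement directly, by a purely \emph{local} argument at the convex vertex $v$. The heart of the matter is the shape of the region lying above $T$ in a small neighbourhood of a convex vertex: I will show it is a $90^{\circ}$ wedge that ``opens upward'', so that any segment reaching $v$ from a point strictly below $v$ must, in its final stretch, dip into the region lying below $T$ and therefore cannot be a valid line of sight.

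\emph{Step 1 (local shape at a convex vertex).} Let $v$ be convex and, for the moment, an interior vertex, so that its two incident edges consist of exactly one horizontal edge and one vertical edge (two consecutive axis-parallel edges in the same direction would be collinear, which is excluded). By the definition of convexity, the region above $T$ near $v$ is the $90^{\circ}$ wedge bounded by these two edge-rays, hence one of the four axis-aligned quadrants at $v$. One side of this quadrant lies along the horizontal edge of $v$; but the thin open strip immediately beneath any horizontal edge of $T$ lies below $T$, so the wedge cannot be the quadrant that hangs below the horizontal edge of $v$. It must therefore be the quadrant lying above it --- the north-east quadrant when $v$ is left convex, the north-west quadrant when $v$ is right convex. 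In either case the wedge is contained in the closed set of directions $\{\,d : d_y \ge 0\,\}$, and its closure does not contain the straight-down direction; equivalently, for any direction $d$ with $d_y < 0$ the point $v + \varepsilon d$ lies below $T$ for all sufficiently small $\varepsilon > 0$.

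\emph{Step 2 (no line of sight from below).} Suppose, for contradiction, that $u$ sees $v$, and set $d = (u - v)/|u - v|$. Since $v$ is higher than $u$ we have $y(u) < y(v)$, hence $d_y < 0$; by Step~1 the initial piece $\{\,v + \varepsilon d : 0 < \varepsilon < \varepsilon_0\,\}$ of the segment $\overline{uv}$ lies below $T$, i.e.\ inside the open region bounded above by $T$. Thus $\overline{uv}$ meets that region, which by definition means $u$ does not see $v$ --- a contradiction. Hence $u$ cannot see $v$.

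\emph{Where the work is, and a boundary remark.} Essentially all of the content is in Step~1: one must use the $90^{\circ}$ condition to deduce that the vertical edge at a convex vertex points \emph{upward}, together with the elementary observation that the strip just below a horizontal edge lies below $T$. Given that, Step~2 is immediate, and it is worth noting that the argument needs no case split on the sign of $x(u) - x(v)$ --- only $y(u) < y(v)$ is used; the case $x(u) = x(v)$ does not even arise, since a vertical edge descending from $v$ would make $v$ reflex. The one boundary case to mention separately is when $v$ is the leftmost or rightmost vertex of $T$, which the stated convention labels convex although it has a single (horizontal) incident edge: there the region below $T$ near $v$ is additionally walled off by a downward vertical ray from $v$, but any candidate $u$ with $y(u) < y(v)$ then lies strictly below $v$ and strictly to the interior side of that ray, so the segment toward $u$ still enters the region below $T$ and the conclusion is unchanged.
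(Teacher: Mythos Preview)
Your argument is correct. The paper does not supply its own proof of this lemma; it is quoted from \cite{O-OTG} and used as a black box, so there is nothing in the paper to compare your approach against. Your local-wedge argument is a clean self-contained justification: the key observation that the $90^{\circ}$ region above $T$ at a convex vertex is one of the two \emph{upper} axis-aligned quadrants (north-west for $V_{RC}$, north-east for $V_{LC}$) is exactly what is needed, and from it the conclusion for any $u$ with $y(u)<y(v)$ is immediate. In fact your proof establishes slightly more than the lemma as stated, since you never use the hypothesis that $u$ is reflex --- any point of $T$ strictly lower than a convex vertex fails to see it.
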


\section{Optimal algorithm for the right (left) convex vertex guarding problem}
In this section, we propose an optimal algorithm for the right (left) guarding problem. 
The optimal algorithm for the right(left) guarding problem can be expressed in the from of a 2-approximation algorithm for the orthogonal terrain guarding problem. 
We design the algorithm through a visibility relationship between reflex vertices, and then prove that the output is the optimal solution. 
We first define the right(left) convex vertex guarding problem. 

{\bf Definition 2} (right convex vertex guarding problem). Given an orthogonal terrain $T$,  a subset $G_r$$\subseteq$$V_{LR}(T) \cup V_{RR}(T)$ of minimum cardinality that guards $V_{RC}(T)$ is computed.

{\bf Definition 3} (left convex vertex guarding problem). Given an orthogonal terrain $T$, a subset $G_l$$\subseteq$$V_{LR}(T) \cup V_{RR}(T)$ of minimum cardinality that guards $V_{LC}(T)$ is computed.

\begin{lemma}
\label{L(vi) not see}
If $ v_i\in V_{RC}(T)$ and $y(L(v_i)) < y(v_{i+1})$, then $L(v_i)$ cannot guard $v_j \in V_{RC}(T)$ that is on the right side of $v_{i+1}$.

\end{lemma}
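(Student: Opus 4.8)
The plan is to argue by contradiction: assume $L(v_i)$ sees some $v_j\in V_{RC}(T)$ with $x(v_j)>x(v_{i+1})$, let $\sigma=\overline{L(v_i)\,v_j}$ be the witnessing line of sight (so $\sigma$ does not meet the open region beneath $T$), and view $\sigma$ as the graph of an affine function of $x$ on $[x(L(v_i)),x(v_j)]$. We may assume $v_i$ is not the leftmost vertex of $T$ (otherwise $L(v_i)=v_i$, and since the edge $e_{i+1}$ lies immediately to the right of $v_i$ at the strictly higher level $y(v_{i+1})$, the vertex $v_i$ sees nothing with larger abscissa at all); then $x(L(v_i))<x(v_i)=x(v_{i+1})$, because $v_{i-1}$ sees $v_i$ along the horizontal edge $e_{i-1}$, and consequently $j\ge i+2$ and $x(v_{j-1})\ge x(v_{i+1})>x(L(v_i))$, so $\sigma$ is defined at every abscissa used below. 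Recalling that consecutive edges of an orthogonal chain alternate between horizontal and vertical, note also: since $v_i\in V_{RC}(T)$, the edge $e_i$ is a vertical piece of $T$ running upward from $v_i$ to $v_{i+1}$ and the following edge $e_{i+1}$ is horizontal at height $y(v_{i+1})$; and since $v_j\in V_{RC}(T)$, the edge $e_{j-1}$ is horizontal at height $y(v_j)$ and runs from $v_{j-1}$ to $v_j$.

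The first and main step is to show that $\sigma$ clears the top of the vertical ``wall'' $e_i$, i.e.\ $\sigma(x(v_i))\ge y(v_{i+1})$. If instead $\sigma(x(v_i))<y(v_{i+1})$, then by continuity $\sigma(x)<y(v_{i+1})$ throughout a right-neighborhood of $x(v_i)$; at every such $x$ the chain $T$ is exactly the horizontal edge $e_{i+1}$ at height $y(v_{i+1})$, so the point $(x,\sigma(x))$ lies strictly below $T$, that is, inside the open region beneath $T$ — contradicting that $\sigma$ is a valid line of sight.

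Only now does the hypothesis enter: since $\sigma(x(L(v_i)))=y(L(v_i))<y(v_{i+1})\le\sigma(x(v_i))$ and $x(L(v_i))<x(v_i)$, the affine function $\sigma$ strictly increases on $[x(L(v_i)),x(v_i)]$, hence on its whole domain. But $\sigma(x(v_j))=y(v_j)$, so strict monotonicity forces $\sigma(x)<y(v_j)$ for every $x\in(x(v_{j-1}),x(v_j))$; over that open interval $T$ is exactly the horizontal edge $e_{j-1}$ at height $y(v_j)$, so once more $(x,\sigma(x))$ lies in the open region beneath $T$, contradicting that $L(v_i)$ sees $v_j$. This finishes the proof.

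The step I expect to be the crux is the first one: turning the informal description of ``lying above $T$'' into the precise statement that a segment which passes the abscissa of a right-convex vertex below the top of that vertex's vertical edge must puncture the region under the chain. Everything after that is elementary affine geometry, and — as the argument shows — the hypothesis $y(L(v_i))<y(v_{i+1})$ is used solely to convert ``$\sigma$ clears the wall'' into ``$\sigma$ is strictly increasing.''
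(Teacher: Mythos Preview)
Your argument is correct. The paper proceeds differently: it splits into two cases according to whether $y(v_j)>y(v_{i+1})$ or $y(v_j)\le y(v_{i+1})$. In the first case it invokes the earlier lemma that a reflex vertex cannot see a strictly higher convex vertex, applied with $L(v_i)$ as the reflex vertex and $v_j$ as the convex one; in the second, both endpoints of $\overline{L(v_i)\,v_j}$ lie strictly below height $y(v_{i+1})$, so the segment meets the vertical edge $e_i=\overline{v_iv_{i+1}}$ below $v_{i+1}$ and is blocked there. Your route avoids both the case split and the external lemma: once you establish that any sight line from $L(v_i)$ past the abscissa $x(v_i)$ must clear the top of $e_i$, the hypothesis $y(L(v_i))<y(v_{i+1})$ forces the line to be strictly increasing, and then the horizontal edge $e_{j-1}$ incoming to the right-convex vertex $v_j$ blocks it uniformly, regardless of how $y(v_j)$ compares to $y(v_{i+1})$. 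This is a bit longer but entirely self-contained; the paper's version is terser but leans on the cited visibility lemma (and, tacitly, on $L(v_i)$ being a reflex vertex so that the lemma applies).
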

 
\begin{proof}
For $j>i+1$, we consider two cases $v_j \in V_{RC}(T)$ into $y(v_j) > y(v_{i+1})$ and $y(v_j) \leq y(v_{i+1})$.
\newtheorem{case}{Case}
\begin{case}
 $y(v_j) > y(v_{i+1})$
\end{case}
If $y(v_j) > y(v_{i+1})$, we know that $L(v_i)$ cannot guard $v_j$ when $j>i+1$ based on Lemma~\ref{higher can't see}. 
\begin{case}
$y(v_j) < y(v_{i+1})$
\end{case}
If $y(v_j) < y(v_j)$, line $\overline{L(v_i)v_j}$ and line $\overline{v_iv_{i+1}}$ have an intersection point below $v_{i+1}$; therefore, $L(v_i)$ cannot guard $v_j$.
\end{proof}

\begin{lemma}
\label{OPT lemma}
An optimal solution for the right convex guarding problem includes the highest vertex $v_j \in vis(v_i)$ such that $v_i$ is the leftmost right convex vertex.  
\end{lemma}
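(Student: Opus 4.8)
The plan is to argue by an exchange argument: start from an arbitrary optimal solution $G_r$ for the right convex guarding problem and show that the highest vertex $v_j \in vis(v_i)$, where $v_i$ is the leftmost right convex vertex, can always be swapped in without increasing the size of the guard set. First I would note that $v_i$, being the leftmost right convex vertex, must itself be guarded by $G_r$, so there is some $g \in G_r$ with $g \in vis(v_i)$; in particular $g$ is a reflex vertex (a guard), and by definition $x(g) \le x(R(v_i))$ while $g$ sees $v_i$. I want to show the entire set of right convex vertices guarded by $g$ is also guarded by $v_j$, so that $(G_r \setminus \{g\}) \cup \{v_j\}$ is still a feasible solution of the same cardinality.

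The key step is the visibility comparison. By Observation~\ref{lmost RC see a LR} and Observation~\ref{lrv sees two rcv}, the candidate guards that can see the leftmost right convex vertex $v_i$ are very restricted — essentially the left reflex vertex immediately above $v_i$ (if any) and the reflex vertices reachable along the terrain to the right. I would let $v_j$ be the highest vertex in $vis(v_i)$; since $v_i$ is right convex, $v_j$ lies weakly to the right of $v_i$ and weakly above it, and among all vertices seeing $v_i$ it is the one with largest $y$-coordinate. Now take any right convex vertex $v_k$ that $g$ guards. If $v_k$ is to the right of $v_j$, I apply the order lemma (Lemma~\ref{order lemma}): both $v_j$ and $g$ see $v_i$ with $v_i$ leftmost, and using the ordering of $x$-coordinates together with the fact that $g$ sees $v_k$, one deduces $v_j$ sees $v_k$. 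If $v_k$ lies between $v_i$ and $v_j$, I would use Lemma~\ref{higher can't see} and Observation~\ref{vj low Lvi}: since $v_j$ is the highest vertex visible from $v_i$, any right convex vertex strictly below the relevant horizontal level is "shielded," and a short case analysis on whether $y(v_k)$ exceeds the height of the vertical edge $e_i$ — exactly the dichotomy used in Lemma~\ref{L(vi) not see} — shows $g$ could not have guarded such a $v_k$ unless $v_j$ also sees it. Finally, the case $x(v_k) < x(v_i)$ cannot occur: $v_i$ is the leftmost right convex vertex, so there is nothing to the left.

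The main obstacle I anticipate is the middle case, where $v_k$ lies horizontally between $v_i$ and the highest visible vertex $v_j$: here the order lemma does not directly apply because the $x$-ordering is wrong, so one must instead exploit the orthogonality and the extremal choice of $v_j$ to rule out that $g$ sees a right convex vertex in this strip that $v_j$ misses. I expect this to reduce, via Observation~\ref{vj low Lvi} applied with $v_k$ in the role of the right convex vertex and $v_i$ in the role of a vertex strictly inside $(L(v_k), v_k)$, to a contradiction with $v_j$ being highest. Once feasibility of the swapped set is established, optimality is immediate: $(G_r \setminus \{g\}) \cup \{v_j\}$ has cardinality at most $|G_r|$, hence is itself optimal and contains $v_j$, proving the claim.
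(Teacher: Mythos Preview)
Your exchange-argument framework is sound and is in fact what the paper implicitly does, but your geometric picture of where the guards sit is backwards, and this breaks your case analysis. By Lemma~\ref{left side}, every right reflex vertex that sees the right convex vertex $v_i$ lies to the \emph{left} of $v_i$, not ``along the terrain to the right''; together with Observation~\ref{lmost RC see a LR} this means the reflex vertices visible from the leftmost right convex vertex $v_i$ are exactly the right reflex vertices to its left (among which $L(v_i)$ is leftmost and, by Observation~\ref{vj low Lvi}, highest) plus possibly the single left reflex vertex $v_{i+1}$ directly above it. Consequently the highest reflex guard $v_j$ is either $L(v_i)$ (to the \emph{left} of $v_i$) or $v_{i+1}$, so your assertion that ``$v_j$ lies weakly to the right of $v_i$'' is false in the first case, and your trichotomy on the position of $v_k$ relative to $v_i$ and $v_j$ collapses.

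Even after repairing the directions, your plan to handle ``$v_k$ to the right of $v_j$'' purely via the order lemma does not go through when $v_j=v_{i+1}$ and $g$ is a right reflex vertex to the left of $v_i$: the order lemma with $g<v_i<v_{i+1}<v_k$ only recovers that $g$ sees $v_k$, not that $v_{i+1}$ does. The paper resolves this by a different split: it first shows (via Lemma~\ref{order lemma}) that $L(v_i)$ dominates every right reflex guard in $vis(v_i)$, and then compares just the two candidates $L(v_i)$ and $v_{i+1}$. If $y(L(v_i))\ge y(v_{i+1})$, Observation~\ref{lrv sees two rcv} limits what $v_{i+1}$ can guard and the order lemma lets $L(v_i)$ cover it; if $y(L(v_i))<y(v_{i+1})$, Lemma~\ref{L(vi) not see} shows $L(v_i)$ (hence every right reflex guard of $v_i$) guards no right convex vertex beyond $v_{i+1}$, so $v_{i+1}$ trivially dominates. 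You should restructure your argument around this left/right dichotomy of the guard $g$ rather than around the position of the target $v_k$.
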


\begin{proof}
We discuss the visibility relationship between reflex vertices $v_k \in S=
vis(v_i)\cap \{V_{LR}(T) \cup V_{RR}(T) \}$
to prove Lemma~\ref{OPT lemma}.
The set $S$ contains a left reflex vertex and right reflex vertices by Lemma~\ref{left side} and Observation~\ref{lmost RC see a LR}.
Finally, we can prove the highest vertex $v_j \in S$ such that $vis(v_j)\cap V_{RC}(T) = vis(S) \cap V_{RC}(T)$. 

First, we discuss the visibility relationship between vertices in $S_L = vis(v_i) \cap V_{RR}(T)$.
We know that $vis(L(v_i))\cap V_{RC}(T) = vis(S_L)\cap V_{RC}(T)$ based on Lemma~\ref{order lemma},
and $L(v_i)$ is the highest vertex in the set $S_L$ based on Observation~\ref{vj low Lvi}, 
as illustrated in Figure~\ref{LR-RC}.  

\begin{figure}[h]
\begin{center}
\includegraphics[scale=0.5]{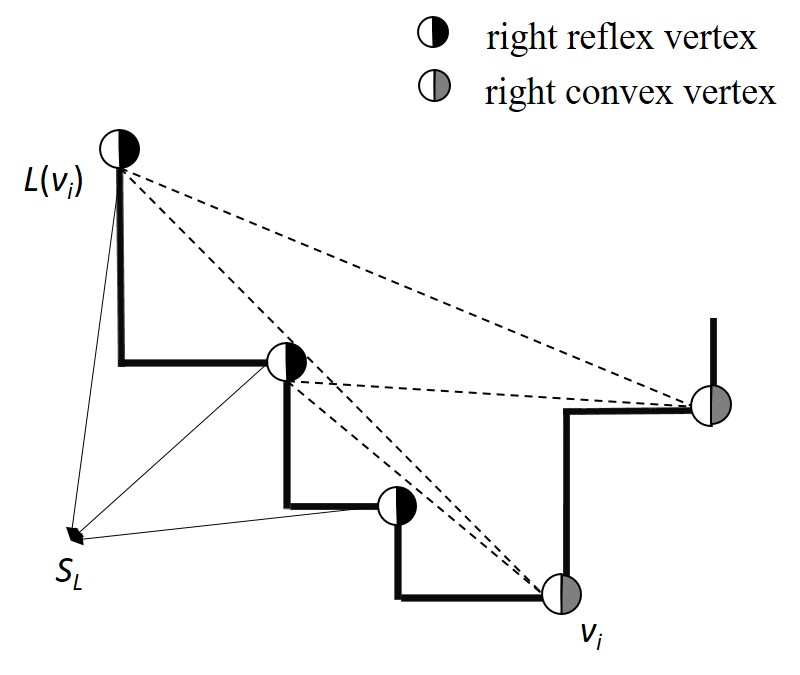}
\caption{The vertex $L(v_i)$ is highest vertex in the set $S_L$ and $vis(L(v_i))\cap V_{RC}(T) = vis(S_L)\cap V_{RC}(T)$.}
\label{LR-RC}
\end{center}
\end{figure}

Next, according to Observation~\ref{lmost RC see a LR}, we discuss the visibility relationship between $L(v_i)$ and $v_{i+1}$.
According to Lemma~\ref{order lemma} and Observation~\ref{lrv sees two rcv}, 
if $y(L(v_i)) \geq y(v_{i+1})$, $vis(L(v_i)) \cap V_{RC}(T) \supseteq vis(v_{i+1}) \cap V_{RC}(T)$.
On the contrary, if $y(v_{i+1}) > y(L(v_i))$, $vis(v_{i+1}) \cap V_{RC}(T) \supseteq vis(L(v_i)) \cap V_{RC}(T)$ based on Lemma~\ref{L(vi) not see}. 

Finally, we know that if $v_j$ is the highest vertex in the set $S$, $vis(v_j) \cap V_{RC}(T) = vis(S) \cap V_{RC}(T)$.
\end{proof}

Based on Lemma~\ref{OPT lemma}, we propose Algorithm~\ref{A1} to compute a $G_r \subseteq V_{LR}(T) \cup V_{RR}(T)$ of minimum cardinality that guards $V_{RC}(T)$. In this paper, we  confine our attention to the number of  $|G_r|$.

\begin{algorithm}[h]
\label{A1}
  \SetAlgoNoEnd
  \caption{Compute a $G_r \subseteq V_{LR}(T) \cup V_{RR}(T)$ of minimum cardinality that guards the $V_{RC}(T)$.}
  \KwIn{$T$: terrain}
  \KwOut{$G_r$}
  
  Compute all $L(v_i)$ for $v_i \in V_{RC}(T)$;
  
  $G_r$ is null;
  
  \For{$v_i \in T$ processed from left to right }
  {
    \If{$v_i \in V_{RC}(T)$ is not guarded by $G_r$}{
      $G_r \leftarrow$ a highest vertex in $\{ L(v_i), v_{i+1} \}$
    }
  }
  \Return $G_r$
\end{algorithm}

\begin{theorem}
\label{gr min}
 The solution $G_r$ of Algorithm~\ref{A1} is an optimal solution for right convex vertex guarding problem. 
\end{theorem}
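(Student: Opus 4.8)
The plan is to establish that $G_r$ is feasible --- it guards every vertex of $V_{RC}(T)$ --- and that $|G_r|\le|G|$ for every feasible solution $G$ of the right convex vertex guarding problem. Feasibility is immediate from Algorithm~\ref{A1}: the loop passes a right convex vertex only after it has been declared guarded, and the vertex inserted when $v_i$ triggers an insertion, namely the higher of $L(v_i)$ and the chain-successor $v_{i+1}$ of $v_i$, both sees $v_i$ (the former by the definition of $L(v_i)$, the latter because $v_i\in V_{RC}(T)$ forces $e_i$ to be the vertical edge joining $v_i$ and $v_{i+1}$) and lies in $V_{LR}(T)\cup V_{RR}(T)$. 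Hence $G_r\subseteq V_{LR}(T)\cup V_{RR}(T)$ and $V_{RC}(T)\subseteq vis(G_r)$.

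For the lower bound, let $v^{(1)},\dots,v^{(k)}$ be the right convex vertices that trigger an insertion, indexed in the left-to-right order in which the loop reaches them, write $v^{(t)}=v_{i_t}$, and let $g_t$ be the guard inserted when $v^{(t)}$ is processed, so that $G_r=\{g_1,\dots,g_k\}$ and $k=|G_r|$. The crux is a \emph{domination} property generalizing Lemma~\ref{OPT lemma}: writing $V_{RC}^{\ge t}=\{w\in V_{RC}(T):x(w)\ge x(v^{(t)})\}$, the inclusion
\[
vis(g)\cap V_{RC}^{\ge t}\ \subseteq\ vis(g_t)\cap V_{RC}^{\ge t}
\]
should hold for every reflex vertex $g\in vis(v^{(t)})$. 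I would prove this by mirroring the case analysis in the proof of Lemma~\ref{OPT lemma}, the point being that only coverage of right convex vertices at or to the right of $v^{(t)}$ matters here, since every right convex vertex strictly to the left of $v^{(t)}$ is already guarded when the loop reaches $v^{(t)}$. First, Lemma~\ref{left side}, Observation~\ref{lrv sees two rcv}, and the fact that no terrain vertex lies directly above a right convex vertex (its incident vertical edge descends from it) together force every reflex vertex of $vis(v^{(t)})$ to have $x$-coordinate at most $x(v^{(t)})$; since $L(v^{(t)})$ is the leftmost such vertex, Lemma~\ref{order lemma} shows $L(v^{(t)})$ sees every element of $V_{RC}^{\ge t}$ seen by any reflex vertex of $vis(v^{(t)})$ other than the chain-successor $v_{i_t+1}$. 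Second, comparing $y(L(v^{(t)}))$ with $y(v_{i_t+1})$ exactly as in Lemma~\ref{OPT lemma} --- invoking Lemma~\ref{L(vi) not see} when $y(v_{i_t+1})>y(L(v^{(t)}))$ --- shows that $g_t$, the higher of the two, dominates both $L(v^{(t)})$ and $v_{i_t+1}$ on $V_{RC}^{\ge t}$; combining the two steps gives the displayed inclusion.

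Granting the domination property, optimality follows by a charging argument. I claim no single reflex vertex sees two of $v^{(1)},\dots,v^{(k)}$. Indeed, if a reflex vertex $g$ saw both $v^{(s)}$ and $v^{(t)}$ with $s<t$, then $x(v^{(t)})>x(v^{(s)})$ (distinct right convex vertices have distinct $x$-coordinates), so $v^{(t)}\in V_{RC}^{\ge s}$, and applying the domination property at step $s$ to $g\in vis(v^{(s)})$ yields $v^{(t)}\in vis(g_s)$; but then $v^{(t)}$ is already guarded when the loop reaches it, contradicting that it triggers an insertion. Now if $G$ is any feasible solution, then $G\subseteq V_{LR}(T)\cup V_{RR}(T)$, so each $v^{(t)}$ is seen by some reflex vertex of $G$, and by the claim these $k$ vertices of $G$ are pairwise distinct; hence $|G|\ge k=|G_r|$. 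Together with feasibility, this shows $G_r$ is an optimal solution.

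I expect the domination property to be the main obstacle: Lemma~\ref{OPT lemma} is stated only for the globally leftmost right convex vertex, and a non-leftmost triggering vertex may be visible to more reflex vertices (and may be seen ``from above''), so Observation~\ref{lmost RC see a LR} no longer applies verbatim and the set $vis(v^{(t)})\cap(V_{LR}(T)\cup V_{RR}(T))$ is harder to analyze. The way around it is precisely the restriction of the domination statement to $V_{RC}^{\ge t}$: the right convex vertices to the left of $v^{(t)}$ are already taken care of by $g_1,\dots,g_{t-1}$, so they need not be covered by $g_t$, and on $V_{RC}^{\ge t}$ the $L(v^{(t)})$-together-with-Lemma~\ref{order lemma} reasoning and the subsequent $y$-comparison go through just as in Lemma~\ref{OPT lemma}.
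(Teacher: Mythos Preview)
Your approach is essentially the paper's: both argue that the triggering vertices $v^{(1)},\dots,v^{(k)}$ cannot share a common reflex guard (so every feasible solution has size at least $k=|G_r|$), and your domination property makes explicit exactly the implication the paper compresses into the single sentence ``If $vis(x_i)\cap vis(x_j)\cap\{V_{RR}(T)\cup V_{LR}(T)\}\ne\emptyset$, $x_j\in vis(g_i)$.'' One slip worth fixing: for a right convex vertex the incident vertical edge $e_{i_t}$ \emph{ascends}, so $v_{i_t+1}$ does sit directly above $v^{(t)}$; this does not damage your argument, since Lemma~\ref{left side} and Observation~\ref{lrv sees two rcv} already force every reflex $g\in vis(v^{(t)})$ to satisfy $x(g)\le x(v^{(t)})$, with equality only for $v_{i_t+1}$.
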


\begin{proof}
In each iteration of Algorithm~\ref{A1}, we added the leftmost unguarded right convex vertex $v_i$ to $X=\{ x_1, x_2,..., x_b \} \subseteq V_{RC}(T)$. According to line 4 of Algorithm~\ref{A1}, we know that $G_r= \{ g_1,g_2,...,g_b \}$, $x_i \in X$ is guarded by $g_i\in G_r$ and $|G_r| = |X|$. Assuming that $x_i, x_j \in X$ and $i < j$, we know that $vis(x_i) \cap vis(x_j) \cap \{ V_{RF}(T) \cup V_{LF}(T) \} = \emptyset$.
If $vis(x_i) \cap vis(x_j) \cap \{ V_{RR}(T) \cup V_{LR}(T) \} \neq \emptyset$, $x_j \in vis(g_i)$.
Therefore, $G_r$ is the minimum cardinality set that guards $X$ and $vis(G_r)\cap V_{RC}(T) = V_{RC}(T)$.
\end{proof}

\section{An $O(n)$ time 2-approximation algorithm for orthogonal terrain guarding}

In this section, we provide a $O(n)$ time 2-approximation algorithm for orthogonal terrain guarding. This is an improvement over the 2-approximation algorithm with $O(n$log$m)$ running time proposed by Lyu et al.\cite{nlogm}. 
We discuss the approximation ratio and time complexity of Algorithm~\ref{A1} in this section.

First, we discuss the relationship between the orthogonal terrain guarding problem and the right(left) convex vertex guarding problem.
According to \cite{2-OTG} and \cite{nlogm}, the orthogonal terrain guarding problem can be divided into left convex vertex and right convex vertex guarding problems. If we unite the optimal solutions of the left and right convex vertex guarding problems, the union provides a 2-approximation solution for the orthogonal terrain guarding problem.

Next, we demonstrate that Algorithm~\ref{A1} runs in $O(n)$. In Algorithm~\ref{A1}, the point is the time complexity of line 4. Algorithm~\ref{A1} runs in $O(n)$ because line 1 ran in $O(n)$ in~\cite{O-OTG}. Lines 3 and 4 runs $O(n)$ in all algorithms.

\begin{lemma}
\label{angle see}
If $v_i,v_j, v_k \in V(T)$, $v_k$ is the leftmost vertex in the set 
$\{v_{k'} \in vis(v_i) \mid i>{k'}>j \}$
and $\angle v_iv_kv_j \leq 180\degree$, then $v_i$ sees $v_j$.
\end{lemma}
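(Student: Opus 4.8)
The plan is to prove the contrapositive-flavored statement directly by contradiction: assume $v_i$ does not see $v_j$, and derive that either $v_k$ is not the leftmost visible vertex strictly between $v_j$ and $v_i$, or that $\angle v_iv_kv_j > 180\degree$. First I would recall what ``$v_i$ sees $v_j$'' fails to mean geometrically: the open segment $\overline{v_iv_j}$ must dip into the closed region bounded above by $T$, so some part of the terrain between $x(v_j)$ and $x(v_i)$ rises strictly above the segment $\overline{v_iv_j}$. In particular there is a vertex (or edge point) of $T$ with $x$-coordinate in $(x(v_j),x(v_i))$ lying strictly above the line through $v_i$ and $v_j$; among such obstructing features, let me take the one realizing the ``highest'' violation, or more conveniently argue about the upper envelope of $T$ restricted to that $x$-range as seen from $v_i$.

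The key steps, in order: (1) Since $v_k \in vis(v_i)$ with $x(v_j) < x(v_k) < x(v_i)$, the segment $\overline{v_iv_k}$ lies on or above $T$. (2) Since $v_k$ is the \emph{leftmost} vertex of $T$ in $vis(v_i)$ with $x$-coordinate in $(x(v_j),x(v_i))$, every vertex of $T$ strictly to the left of $v_k$ in that range is \emph{not} visible from $v_i$, meaning each such vertex lies strictly below the ray from $v_i$ that just grazes the terrain — equivalently, the terrain to the left of $v_k$ (down to $x(v_j)$) stays at or below the line $\overline{v_iv_k}$ extended leftward, or is blocked by $v_k$ itself. This is where I would lean on Lemma~\ref{order lemma} (the order lemma): if some vertex $v_\ell$ with $x(v_j) < x(v_\ell) < x(v_k)$ were visible from a point far enough right, visibility would propagate; I want to conclude that the segment $\overline{v_iv_j}$, if it failed to be above $T$, would force an obstruction, and that obstruction would have to be a vertex left of $v_k$ visible from $v_i$ — contradicting leftmost-ness — \emph{unless} $v_k$ itself is the obstruction, which is exactly the case $\angle v_iv_kv_j > 180\degree$ (i.e., $v_k$ pokes above $\overline{v_iv_j}$). (3) Conclude: if $\angle v_iv_kv_j \leq 180\degree$ then $v_k$ lies on or below $\overline{v_iv_j}$, so $v_k$ does not obstruct; combined with step (2) ruling out any other obstruction between $v_j$ and $v_i$, the segment $\overline{v_iv_j}$ lies on or above $T$, hence $v_i$ sees $v_j$.

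The main obstacle I anticipate is step (2): making rigorous the claim that ``no vertex strictly between $v_j$ and $v_k$ obstructs the segment $\overline{v_iv_j}$,'' because such a vertex need not be visible from $v_i$ and so is not directly controlled by the hypothesis on $v_k$. I would handle this by applying the order lemma (Lemma~\ref{order lemma}) to a suitable quadruple: if a vertex $v_\ell$ with $x(v_j) < x(v_\ell) < x(v_k)$ rose above $\overline{v_iv_j}$, then considering $v_j$, $v_\ell$, $v_k$, $v_i$ and the fact that $v_i$ sees $v_k$, I want to force $v_i$ to see $v_\ell$ (or some vertex left of $v_k$), contradicting that $v_k$ is leftmost in $vis(v_i)$ on that range. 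The angle condition $\angle v_iv_kv_j \leq 180\degree$ is precisely what prevents the one obstruction the hypothesis cannot otherwise exclude, namely $v_k$ itself, so the two hypotheses together are exactly what is needed and nothing is wasted.
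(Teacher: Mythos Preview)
Your overall strategy---assume $v_i$ does not see $v_j$, locate an obstructing vertex strictly between $v_j$ and $v_k$, and argue that $v_i$ must see it, contradicting the leftmost-ness of $v_k$---is exactly what the paper does (in one terse sentence). The gap is in your step~(2): invoking the order lemma (Lemma~\ref{order lemma}) on the quadruple $v_j, v_\ell, v_k, v_i$ cannot work, because that lemma requires \emph{two} visibility pairs as input (``$a$ sees $c$'' and ``$b$ sees $d$'') to produce a third, and you have only one, namely $v_i$ sees $v_k$. No arrangement of those four vertices supplies the missing pair without assuming what you are trying to prove.

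The fix is the direct ``first obstruction'' argument you already half-stated before reaching for the order lemma. Suppose some vertex $v_p$ with $j<p<k$ lies strictly above the line $\ell$ through $v_i$ and $v_k$; take $v_p$ to be the \emph{rightmost} such vertex. On $[x(v_k),x(v_i)]$ the segment $\overline{v_iv_p}$ lies above $\overline{v_iv_k}$ (it has larger slope at $v_i$), which in turn lies on or above $T$; on $[x(v_p),x(v_k)]$ every vertex of $T$ lies on or below $\ell$ by the choice of $v_p$, hence strictly below $\overline{v_iv_p}$. Thus $v_i$ sees $v_p$, contradicting that $v_k$ is leftmost in $vis(v_i)$ on that range. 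Consequently the terrain on $(x(v_j),x(v_k))$ stays at or below $\ell$; since the angle hypothesis $\angle v_iv_kv_j \le 180\degree$ says exactly that $\ell$ lies at or below $\overline{v_iv_j}$ there, and $\overline{v_iv_k}$ already dominates $T$ on $[x(v_k),x(v_i)]$, the segment $\overline{v_iv_j}$ is on or above $T$ throughout, so $v_i$ sees $v_j$. (Your aside that an invisible vertex left of $v_k$ might be ``blocked by $v_k$ itself'' is also off: any point above $\ell$ and left of $v_k$ has $v_k$ strictly below its sight line to $v_i$, so $v_k$ cannot be the blocker.)
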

\begin{proof}
Assume that $i>k>j$, $v_k$ is leftmost vertex in the set
$\{v_{k'} \in vis(v_i) \mid i>{k'}>j \}$
and $\angle v_iv_kv_j \leq 180\degree$(the angle $\angle v_iv_kv_j$ is shown in Figure~\ref{fig3}).
If $v_i$ cannot see $v_j$, a vertex $v_p$ exists and lies above the line segment $\overline{v_kv_j}$ and between $v_k$ and $v_j$; however, the assumption that $v_i$ cannot see $v_p$ is contradictory.
\end{proof}

\begin{figure}[h]
\begin{center}
\includegraphics[scale=0.5]{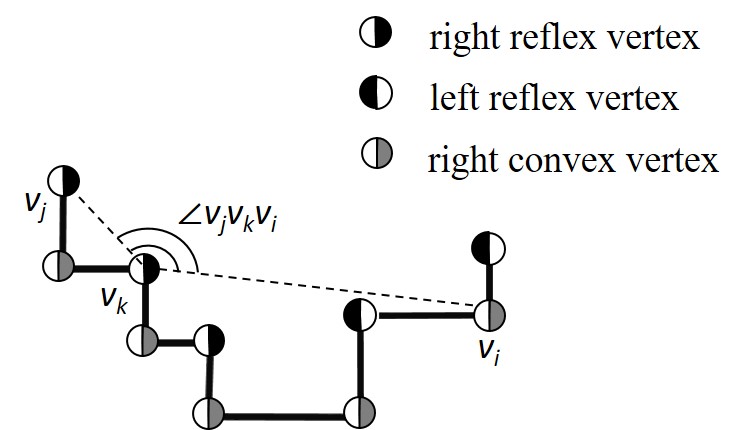}
\caption{Lemma~\ref{angle see}.}
\label{fig3}
\end{center}
\end{figure}





Based on Lemmas~\ref{order lemma},~\ref{angle see} and~\ref{L(vi) not see}, Algorithm~\ref{A1} processes line 4 for all $v_i \in V_{RC}(T)$ in linear time.
If $v_a \in G_r$ guards $v_i$, $g(v_i) = v_a$.
For each $v_i$, we examine whether $v_i$ is guarded by $g_a$ from $v_i$ to $g(v_i)$ such that $x(g(v_i)) \geq x(L(v_i))$.
If $v_j \in V_{RC}(T), j<i$, we do not visit $v_k \in V(T)$ such that $x(g(v_j)) < x(v_k) < x(v_j)$.
Moreover, if $g(v_i)$ is $v_{i+1}$, we do not visit the vertex between $L(v_i)$ and $v_i$.
After computing $L(v_i)$ for all $v_i \in V_{RC}(T)$, Algorithm~\ref{A1} runs in $O(n)$.
We describe Algorithm~\ref{A2} for Algorithm~\ref{A1} in  greater detail as follows: 


\begin{algorithm}[]
\label{A2}
  \SetAlgoNoEnd
  \caption{Compute a $G_r \subseteq V_{LR}(T) \cup V_{RR}(T)$ of minimum cardinality that guards the $V_{RC}(T)$}
  \KwIn{$T$: terrain}
  \KwOut{$G_r$}
  
  $G_r$ is null;
  
  $V(T')$=$V(T)$;
  
  \For{$v_i \in V_{RC}(T)$ processed from left to right } {
  	\If{$G_r$ is null}{
    $g(v_i)$ be a higher vertex between $L(v_i)$ and $v_{i+1}$;
         
         Add $g(v_i)$ to $G_r$;
         
         Remove the vertics between $v_i$ and $L(v_i)$ from $V(T')$;
    }
   \Else{
    \While {$g(v_i)$ is null}{
    $g_j$ be the rightmost vertex in $G_r \cap V(T')$;
    
     \If{$v_i$ is guarded by $g_j$}
     {
      $g(v_i)$ is $g_j$;
      
      Remove the vertices between $v_i$ and $g_j$ from $V(T')$;
     }
     \ElseIf{$x(g_j) < x(L(v_i))$}
       {
         $g(v_i)$ be a higher vertex between $L(v_i)$ and $v_{i+1}$;
         
         Add $g(v_i)$ to $G_r$;
         
         Remove the vertics between $v_i$ and $L(v_i)$ from $V(T')$;
       }
       \Else
       {Remove $g_j$ from $V(T')$;}
    } 
    }    
    	  
    	}

  \Return $G_r$
\end{algorithm}


\begin{theorem}
Algorithm~\ref{A1} runs in $O(n)$.
\end{theorem}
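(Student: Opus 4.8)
The plan is to argue that Algorithm~\ref{A2} is a correct reformulation of Algorithm~\ref{A1} and that it touches each vertex of $T$ a bounded number of times, so that after the linear-time preprocessing of line~1 of Algorithm~\ref{A1} (computing all $L(v_i)$, which runs in $O(n)$ by~\cite{O-OTG}) the main loop also runs in $O(n)$. First I would establish the invariant that at the start of each iteration of the outer \textbf{for} loop, $V(T')$ consists exactly of those vertices that have not yet been ``charged away,'' and that the guards already placed in $G_r$ cover every right convex vertex strictly to the left of the current $v_i$. Correctness of the guard chosen in line~5/line~16 — a highest vertex among $\{L(v_i),v_{i+1}\}$ — is exactly Lemma~\ref{OPT lemma}, and the fact that a previously placed guard $g_j$ with $x(g_j)<x(L(v_i))$ cannot help $v_i$ is Lemma~\ref{L(vi) not see}; these two facts together show the \textbf{while} loop makes the same decision Algorithm~\ref{A1} would.

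The running-time argument is an amortization / charging argument on the deletions from $V(T')$. The key observation is that every vertex is removed from $V(T')$ at most once, and almost all the work the algorithm does is paid for by such a removal. Concretely: each time the \textbf{while} loop iterates, one of three things happens — we find the guard $g_j$ (line~11, which terminates the loop after removing the vertices between $v_i$ and $g_j$), we place a new guard (line~15, which terminates the loop after removing the vertices between $v_i$ and $L(v_i)$), or we discard a stale guard $g_j$ (line~18, which removes $g_j$). In all three cases the iteration removes at least one vertex from $V(T')$ — in the last case the single vertex $g_j$, in the first two cases the block of vertices scanned since the previous right convex vertex (this is where Lemma~\ref{angle see} is used: once $v_i$ is guarded, the interval between $v_i$ and its guard, or between $v_i$ and $L(v_i)$, will never need to be re-examined, so it is safe to delete it). Hence the total number of \textbf{while}-loop iterations over the whole run is $O(n)$, and each iteration does $O(1)$ work (looking up the rightmost surviving guard, one visibility test, and a bounded splice of $V(T')$ implemented as a doubly linked list). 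Adding the $O(n)$ preprocessing gives the claimed $O(n)$ bound, and by the left–right symmetry the same holds for the left convex vertex guarding problem.

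The main obstacle I expect is making the ``remove the vertices between $v_i$ and $g_j$ / between $v_i$ and $L(v_i)$'' steps genuinely constant-amortized rather than linear per iteration: one has to argue that the deleted interval is disjoint from all previously deleted intervals (so the total deletion work telescopes to $O(n)$) and that the data structure for $V(T')$ supports ``rightmost guard still alive'' and ``splice out a contiguous block'' in $O(1)$ amortized time. The disjointness is precisely the content of the two sentences in the text preceding Algorithm~\ref{A2} — ``if $v_j\in V_{RC}(T)$, $j<i$, we do not visit $v_k$ with $x(g(v_j))<x(v_k)<x(v_j)$'' and ``if $g(v_i)$ is $v_{i+1}$ we do not visit the vertex between $L(v_i)$ and $v_i$'' — so the crux is to turn those informal remarks into a clean statement that the set of scanned-and-deleted vertices, summed over all iterations, is a subset of $V(T)$ with each vertex counted once. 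A secondary subtlety is the interaction between the $L(v_i)$ values computed on the original terrain and the shrunken set $V(T')$: one must check that using $L(v_i)$ (from the preprocessing) together with the surviving guards never causes the algorithm to miss a guard it should have reused, which again reduces to Lemma~\ref{order lemma} and Lemma~\ref{L(vi) not see}.
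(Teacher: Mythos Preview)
Your proposal is correct and follows essentially the same amortization strategy as the paper: charge the work of the main loop to vertices that are permanently removed from $V(T')$, observe that each vertex is removed at most once, and conclude that the total work beyond the $O(n)$ preprocessing of line~1 is $O(n)$. The paper's own proof is considerably terser---it simply asserts that no vertex is visited more than $3|V(T)|$ times---whereas you spell out the case analysis on the three \textbf{while}-loop outcomes and the data-structure requirements; but the underlying idea is identical.

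One small misattribution: the fact that a guard $g_j$ with $x(g_j)<x(L(v_i))$ cannot see $v_i$ is immediate from the definition of $L(v_i)$ as the \emph{leftmost} vertex seeing $v_i$, not from Lemma~\ref{L(vi) not see}. That lemma is instead what justifies discarding the interval between $L(v_i)$ and $v_i$ when the chosen guard is $v_{i+1}$ (since $L(v_i)$, being lower than $v_{i+1}$, will never guard any future right convex vertex), which is exactly how the paper invokes it.
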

\begin{proof}
For each $v_i \in V_{RC}(T)$, we examine whether $v_i$ is guarded by $g \in G_r$ from $v_i$ to $g(v_i)$.
If $g(v_i) = g = v_j$, Algorithm~\ref{A1} does not visit $v_k \in \{ v_{k'} \in V(T) \mid j<k'<i \}$.
Assume that $g(v_i)$ is $v_{i+1}$ and $L(v_i)=v_p$, we will not visit $\{v_q \in V(T) \mid p \leq q \leq i \}$ by Lemma~\ref{L(vi) not see}.
Therefore, the algorithm visits the vertex at most $3|V(T)|$ times.
\end{proof}

\begin{figure}
\begin{center}
\includegraphics[scale=0.5]{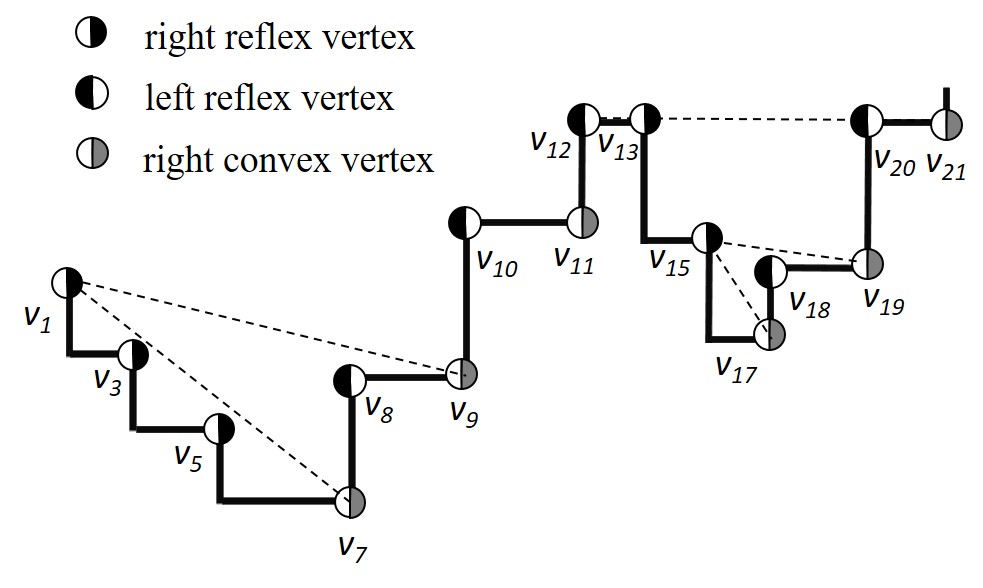}
\caption{The example for Algorithm~\ref{A1}.}
\label{exp}
\end{center}
\end{figure}

An example is provided in Figure~\ref{exp}. Vertex $v_7$ is the leftmost right convex vertex in $T$.
In the first round, algorithm~\ref{A2} visits vertices $v_7, v_6,...,v_1$ and records $g(v_7) = v_1 = L(v_7)$.
In the second round, the algorithm visits vertices $v_9, v_8, v_7$ and $v_1$.
Because $v_1 \in G_r$ and $v_1 = L(v_9)$, the algorithm records $g(v_9) = v_1$.
In the third round, algorithm visits vertices $v_{11}$ and $v_{10}$ because $v_{10} = L(v_{11})$. Because $y(v_{10}) < y(v_{12})$, $g(v_{11})$ is $v_{12}$.
In the fourth round, $g(v_{17})$ is $v_{15}$.
In the fifth round, the algorithm visits vertices $v_{19}, v_{18}, v_{17}$ and $v_{15}$.
Based on $v_{15} \in G_r$ and $\angle v_{19}v_{18}v_{15} < 180\degree$, the algorithm records that $g(v_{19})$ is $v_{15}$.
In the final round, the algorithm vertices $v_{21}, v_{20}, v_{19}, v_{15},...,v_{12}$ and records that $g(v_{21}) = v_{12}$.

\section{Conclusion}
This paper considers the problem of guarding the orthogonal terrain vertex $V(T)$ with the minimum number of vertex guards. 
Our algorithm can determine the minimal cardinality vertex that guards the right (left) convex vertex of $T$.
This paper demonstrates that our algorithm runs in $O(n)$ where $n$ is the number of vertices on $T$.
Therefore, we provide a 2-approximation algorithm for the orthogonal terrain guarding problem in $O(n)$.



\bibliographystyle{elsarticle-num.bst}
\bibliography{bibliography.bib}
\end{document}